\newtheorem{theorem}{Theorem}
\newtheorem{corollary}[theorem]{Corollary}
\newtheorem{lemma}[theorem]{Lemma}
\newenvironment{proof}[1][Proof]{\noindent\textbf{#1.} }{\ \rule{0.5em}{0.5em}}
\begin{document}
\title{Spin systems dynamics and faults detection in threshold networks}
\author{Steve Kirkland}
\email{stephen.kirkland@nuim.ie}
\affiliation{Hamilton Institute, National University of Ireland, Maynooth Co. Kildare, Ireland}
\author{Simone Severini}
\email{simoseve@gmail.com}
\affiliation{Department of Physics and Astronomy, University College London, WC1E 6BT
London, United Kingdom}

\begin{abstract}
We consider an agent on a fixed but arbitrary node of a known threshold
network, with the task of detecting an unknown missing link/node. We obtain
analytic formulas for the probability of success, when the agent's tool is the
free evolution of a single excitation on an $XX$ spin system paired with the
network. We completely characterize the parameters allowing for an
advantageous solution. From the results emerges an optimal (deterministic)
algorithm for quantum search, therefore gaining a quadratic speed-up with
respect to the optimal classical analogue, and in line with well-known results
in quantum computation. When attempting to detect a faulty node, the chosen
setting appears to be very fragile and the probability of success too small to
be of any direct use.

\end{abstract}
\maketitle

\section{Introduction}

Searching and traversing graphs with the use of a quantum dynamics (discrete
or continuous) is a topic of wide study. The main findings on the algorithmic
side exhibit a quadratic gain (producing then a class of Grover-like results),
when used to search marked vertices in hypercubes, lattices, and more general
objects \cite{se}, and an exponential one when transferring information to a
distant site by free evolution \cite{ex}. With essentially the same formalism,
the dynamics of spin systems forms the basis of concrete proposals for
implementing communication buses in a variety of nanotechnology devices
\cite{so}. The central point is always a unitary operator reflecting the
topology of a graph whose vertices are encoded in pure states or a
(time-independent) Hamiltonian describing the interactions of particles in the
actual physical network, and thus representing a noiseless quantum channel.

Considering these two contexts together, we take a parsimonious agent located
on a specific node of a known threshold network, attempting to detect an
unknown missing link/node. We obtain analytic formulas for the probability of
success, when the missing link/node is searched by letting evolve a single
excitation from the given vertex. For specific parameters, we have an optimal
algorithm for quantum search in a deterministic fashion. Our working ground is
an \emph{XX} (or, equivalently, isotropic \emph{XY}) system with homogeneous
couplings and site dependent magnetic fields \cite{cas} (a proposed setting
for distributed implementation consists of cavities connected by optical
fibers \cite{im}).

Threshold graphs are applied to the synchronization of parallel processes, set
packing, scheduling, and to define the Guttman scale in the area of
statistical surveys \cite{re}. Formally, a \emph{threshold graph} can be
constructed from the one-vertex graph by repeatedly adding a single vertex of
two possible types: an \emph{isolated vertex}, \emph{i.e.} a vertex without
edges; a \emph{dominating vertex}, \emph{i.e.} a vertex connected to all other
vertices. The \emph{creation sequence} of a threshold graph $G$ on $n$
vertices is a word $x\left(  G\right)  \in\{0,1\}^{n}$ recursively defined as
follows: $x\left(  G\right)  =x\left(  G-i\right)  y$, where $y=0$ if $i$ is
isolated and $y=1$ if $i$ is dominating. A threshold graph is then
characterized by its creation sequence, a property with important
consequences. In fact, the information required to store a threshold graph on
$n$ vertices is at most $n$ bits, two threshold graphs are isomorphic if and
only if they have the same degree sequence, and the recognition can be done in
linear time. Even if the above definition is somehow restrictive, it is
valuable that one can construct arbitrarily large threshold networks with
approximately any prescribed degree distribution, including the scale-free one
\cite{ha}.

The dynamics of the first excitation sector in the \emph{XX} model is governed
by the (combinatorial) Laplacian \cite{cas}. The \emph{Laplacian} of a graph
$G=\left(  V,E\right)  $ is the matrix $L$ with entries $[L]_{i,j}=d\left(
i\right)  $ if $i=j$, $[L]_{i,j}=-1$ if $\{i,j\}\in E$ and $[L]_{i,j}=0$ if
$\{i,j\}\notin E$; here $d\left(  i\right)  =\left\vert \left\{  j:\{i,j\}\in
E\right\}  \right\vert $ is the \emph{degree} of $i\in V$. Let $\lambda
_{1}\geq\lambda_{2}\geq\cdots\geq\lambda_{n}=0$ be the eigenvalues of $L$
arranged in the nonincreasing order. A graph is a threshold graph if and only
if $\lambda_{j}=\left\vert \left\{  i:d\left(  i\right)  \geq j\right\}
\right\vert $, with $j=1,...,n$ \cite{me}. In other words, the degree sequence
and the nonzero eigenvalues of a threshold graph $G$ are (Ferrer's) conjugate
partitions of $2\left\vert E\right\vert $. For example, let us consider the
word $0011011$. The graph $G$ has degree sequence $\{2,4,4,5,5,6,6\}$ and
eigenvalues $\{7,7,6,6,4,2,0\}$.

Let $|\psi_{t}\rangle=U_{t}|\psi_{0}\rangle$, where $U_{t}\equiv\exp(-iLt)$,
for $t\in\mathbb{R}^{+}$, and $|\psi_{0}\rangle$ is an element of an $n$
dimensional Hilbert space, with standard basis $\{|i\rangle:i\in V\}$. Let
$p_{G}\left(  i,j,t\right)  =\left\vert \langle j|U_{t}|i\rangle\right\vert
^{2}$ be the probability that a single excitation travels from node $i$ to
node $j$ after a free evolution of duration $t$ (this is also called
\emph{fidelity}); we say that there is \emph{perfect state transfer} when
$p_{G}\left(  i,j,t\right)  =1$, for $i\neq j$. Since the Laplacian spectrum
of a threshold graph is integral, it follows that the dynamics governed by $L$
is always periodic \cite{so, god}, \emph{i.e.}, there is $t$ such that
$p_{G}\left(  i,i,t\right)  =1$, for every $i$.

Firstly, we propose in Theorem \ref{com} a complete characterization of the
class of threshold graphs allowing for perfect state transfer. Secondly, with
a direct application of this result, we describe how the evolution can be used
to detect a missing link. The procedure gives a straightforward algorithm for
(optimal) quantum search in certain threshold networks. Indeed, as a special
case, we have a deterministic version of Grover's search \cite{be}. Theorem
\ref{com2} shows that the result can not be extended to a large subfamily of
threshold graphs. As expected, the graphs for which the result holds can be
associated to modulated chains of length $3$, where the couplings are
determined by the number of edges connecting vertices from certain subsets
(namely, an equitable partition). This is a common situation when studying
perfect state transfer \cite{so}. Interference effects drive the evolution in
a way that does not seem to be exploitable for searching (by using out
technique), unless the missing links belong to the subsets.

\section{Faulty links}

For any $p\in\mathbb{N},$ let $K_{p}$ denote the complete graph on $p$
vertices, and let $O_{p}$ denote the empty graph on $p$ vertices. Let
$\mathbf{0}_{p}$ and $\mathbf{1}_{p}$ denote the zero vector and all ones
vector, respectively, both of order $p$. For $p,q\in\mathbb{N},$ we use
$J_{p\times q}$ to denote the $p\times q$ all ones matrix; often the subscript
will be suppressed when the order is evident from the context. We need two
operations: the union and the joint. The \emph{union}, denoted by $\cup$,
consists of taking two graphs and looking at them as a single one whose
connected components are exactly the graphs. The \emph{join}, denoted by
$\vee$, is the graph obtained by taking the union of two graphs, plus edges
connecting all their respective vertices. The following lemma can be deduced
from the basic properties of threshold graphs and it does not require a proof
(see, \emph{e.g.}, \cite{me}). It is our main technical tool.

\begin{lemma}
\label{lem}Let $G$ be a connected graph on at least two vertices. Then $G$ is
a threshold graph if and only if one of the following two conditions is satisfied:

\begin{enumerate}
\item[(i)] there are indices $m_{1},\ldots,m_{2k}\in\mathbb{N}$ with
$m_{1}\geq2$ such that $G=((((O_{m_{1}}\vee K_{m_{2}})\cup O_{m_{3}})\vee
K_{m_{4}})\ldots)\vee K_{m_{2k}}\equiv\Gamma(m_{1},\ldots,m_{2k})$;

\item[(ii)] there are indices $m_{1},\ldots,m_{2k+1}\in\mathbb{N}$ with
$m_{1}\geq2$ such that $G=((((K_{m_{1}}\cup O_{m_{2}})\vee K_{m_{3}})\cup
O_{m_{4}})\ldots)\vee K_{m_{2k+1}}\equiv\Gamma(m_{1},\ldots,m_{2k+1})$.

The vertices $1.,..,m_{1}$ correspond to the first subset, $m_{1}%
+1,...,m_{1}+m_{2}$ correspond to the second subset, \emph{etc.}
\end{enumerate}
\end{lemma}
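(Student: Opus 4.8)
The plan is to prove both implications of the stated equivalence by reducing everything to the \emph{creation sequence} $x(G)=s_1 s_2\cdots s_n$ introduced in the Introduction, together with two elementary translation rules. Reading the construction of $G$ from left to right, appending an isolated vertex is exactly the operation $\cup\,O_1$, while appending a dominating vertex is exactly $\vee\,K_1$. The first rule I would establish is that appending $m$ isolated vertices consecutively to a graph $H$ yields $H\cup O_m$; the second, slightly less obvious rule, is that appending $m$ dominating vertices consecutively yields $H\vee K_m$, since each newly added dominating vertex is adjacent to all previously added ones and they therefore form a clique $K_m$ joined to $H$. With these rules in hand, a maximal run of equal symbols in $x(G)$ corresponds to a single block $O_{m}$ or $K_{m}$, and the whole word becomes an alternating chain of unions-with-$O$ and joins-with-$K$ --- precisely the shape of $\Gamma(\cdot)$.

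For the direction asserting that every $\Gamma$ of type (i) or (ii) is a connected threshold graph on at least two vertices, I would run the translation rules in reverse: each block $O_{m_i}$ (resp.\ $K_{m_i}$) is realized by appending $m_i$ isolated (resp.\ dominating) vertices, so $\Gamma(m_1,\ldots)$ admits a valid creation sequence and is therefore threshold. Connectivity is immediate because in both (i) and (ii) the outermost operation is a join with a clique $K_{m_{2k}}$ (resp.\ $K_{m_{2k+1}}$): this final clique is adjacent to every other vertex, so the graph is connected, and the constraint $m_1\ge 2$ forces at least two vertices.

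For the converse I would start from a connected threshold graph $G$ on $n\ge 2$ vertices and analyze $x(G)$, taking the standard convention that the initial vertex contributes $s_1=0$ (as in the worked example $0011011$). Connectivity forces the last appended vertex to be dominating --- an isolated final vertex would be a separate component --- so $x(G)$ ends in a run of $1$'s and the outermost operation is a join with a clique. Decomposing $x(G)$ into maximal runs and applying the translation rules then produces an alternating $\Gamma$-chain terminating in a $K$-block. The case split is governed by $s_2$: if $s_2=0$ the leading run of $0$'s has length $m_1\ge 2$ and we land in form (i) with an even number $2k$ of blocks; if $s_2=1$ the solitary leading $0$ merges with the ensuing run of $1$'s into a clique $K_{m_1}$ with $m_1\ge 2$, and we land in form (ii) with an odd number $2k+1$ of blocks.

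The main obstacle I anticipate is the careful bookkeeping at the \emph{seed} of the sequence, namely justifying the normalization $m_1\ge 2$ and the parity of the number of blocks. The subtlety is that the first vertex carries no adjacency information on its own, so its symbol must be absorbed into the first genuine block: it enlarges a leading run of $0$'s in case (i), but in case (ii) it is instead swallowed by the first run of $1$'s to form the initial clique, which is why $m_1$ there equals one plus the length of that run rather than the run length itself. Verifying that this absorption is consistent, and that the alternation of $O$- and $K$-blocks together with the forced terminal $K$-block yields exactly the parities $2k$ and $2k+1$, is where I would concentrate the rigor; the remaining steps are routine once the two translation rules are in place.
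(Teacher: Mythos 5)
Your proof is correct, but there is nothing in the paper to compare it against: the authors explicitly state that the lemma ``can be deduced from the basic properties of threshold graphs and it does not require a proof,'' citing the standard references on threshold graphs, so you have supplied the argument they chose to omit. What you give is the standard deduction those references encode: the two translation rules (a maximal run of $0$'s appends $\cup\,O_m$, a maximal run of $1$'s appends $\vee\,K_m$, the latter because consecutively added dominating vertices are mutually adjacent) are both correct, connectivity on $n\ge 2$ vertices does force the final symbol of the creation sequence to be $1$ and hence a terminal $K$-block, and your treatment of the two delicate points --- absorbing the adjacency-free seed vertex into the leading block (enlarging the $O$-run when $s_2=0$, merging with the first $1$-run to give $K_{m_1}$ with $m_1=1+(\text{run length})\ge 2$ when $s_2=1$) and deriving the block parities $2k$ versus $2k+1$ from an alternating chain that starts with $O$ or $K$ and ends with $K$ --- is exactly the bookkeeping that makes the normalization work. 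The reverse direction is equally sound: running the rules backwards exhibits a valid creation sequence, so each $\Gamma(m_1,\ldots)$ is threshold, and the outermost join with a nonempty clique gives connectivity. One convention worth stating explicitly for completeness: in form (i) one needs $k\ge 1$, while in form (ii) the degenerate case $k=0$ must be admitted, so that $\Gamma(m_1)=K_{m_1}$ covers the complete graphs (\emph{e.g.} $K_2$, with word $01$); your parity analysis handles this case, but it deserves a sentence.
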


The next theorem is the central result of this section. It gives a complete
characterization of threshold graphs with perfect state transfer. As it is
usually done when studying this topic, we write explicitly the eigensystem of
the Laplacian. The parameterization is the one described in Lemma \ref{lem}.

\begin{theorem}
\label{com}Let $G$ be a threshold graph. When $G\equiv\Gamma(m_{1}%
,\ldots,m_{2k})$ (resp. $G\equiv\Gamma(m_{1},\ldots,m_{2k+1})$), $p_{G}\left(
i,j,t\right)  =1$ if and only if $\left(  i,j\right)  =\left(  1,2\right)  $
and in addition: $t=\frac{\pi}{2}$; $m_{1}=2$; $m_{2}\equiv2\operatorname{mod}%
4$; and $m_{j}\equiv0\operatorname{mod}4,j=3,\ldots,2k$ (resp. $j=3,\ldots
,2k+1$).

\end{theorem}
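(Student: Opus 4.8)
The plan is to exploit the equitable partition of $G$ into the blocks $B_1,\ldots,B_N$ (with $N=2k$ or $2k+1$) supplied by Lemma~\ref{lem}. First I would split the state space into the symmetric subspace $\mathcal{S}=\mathrm{span}\{u_1,\ldots,u_N\}$, where $u_p=m_p^{-1/2}\sum_{a\in B_p}|a\rangle$ is the normalized uniform vector on $B_p$, and its orthogonal complement $\mathcal{S}^{\perp}$. Since all vertices of a block are twins (same neighbourhood, mutually non-adjacent in an $O$-block and mutually adjacent in a $K$-block), every difference $|a\rangle-|b\rangle$ with $a,b\in B_p$ is an eigenvector of $L$: the eigenvalue is the common degree $d_p$ when $B_p$ is an $O$-block and $d_p+1$ when $B_p$ is a $K$-block. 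Call this block eigenvalue $\mu_p$. These differences span $\mathcal{S}^{\perp}$, while $\mathcal{S}$ is $L$-invariant and carries the (symmetrised) quotient Laplacian $B$; this reduces the whole dynamics to the $N$-dimensional evolution on $\mathcal{S}$ together with the phases $e^{-i\mu_p t}$ on $\mathcal{S}^{\perp}$.

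Next I would reduce the admissible pairs $(i,j)$. Writing $c_{pq}(t)=\langle u_p|U_t|u_q\rangle$ and expanding $|i\rangle$ into its $u_p$-component and its $\mathcal{S}^{\perp}$-component, one finds for $i\in B_p$, $j\in B_q$ with $p\neq q$ the amplitude $\langle j|U_t|i\rangle=c_{pq}(t)/\sqrt{m_pm_q}$, which is independent of the choice of $j$ inside $B_q$; hence if $m_q\geq2$ two distinct targets receive equal amplitudes and the total probability exceeds $1$, and the same reasoning at the source forces $m_p=1$. Thus cross-block transfer can occur only between two singleton blocks, a case I would dispose of together with the spectral computation below. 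For $i\neq j$ in the same block $B_p$ one gets $\langle j|U_t|i\rangle=\tfrac{1}{m_p}\bigl(c_{pp}(t)-e^{-i\mu_p t}\bigr)$, and since $|c_{pp}(t)|\leq1$ this modulus can reach $1$ only if $m_p=2$, $|c_{pp}(t)|=1$, and $c_{pp}(t)=-e^{-i\mu_p t}$.

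The condition $|c_{pp}(t)|=1$ forces $U_tu_p$ to be a phase multiple of $u_p$ at that instant, so every eigenvalue $\lambda_\ell$ with $\langle v_\ell|u_p\rangle\neq0$ must satisfy $e^{-i\lambda_\ell t}=e^{i\theta}$ for a common $\theta$, while the matching condition adds $e^{-i\mu_p t}=-e^{i\theta}$. Because the Laplacian spectrum of a threshold graph is integral, at $t=\pi/2$ these become the arithmetic requirements that all eigenvalues appearing in the $\mathcal{S}$-expansion of $u_p$ be congruent modulo $4$ and that $\mu_p$ be congruent to that common residue plus $2$ modulo $4$; conversely these congruences make $t=\pi/2$ a, and the minimal, perfect-transfer time.

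The remaining, and hardest, step is the explicit spectral bookkeeping: using the conjugate-partition description of the spectrum together with the chain-like quotient $B$, I would compute the block degrees $d_p$, the eigenvalues carried by $\mathcal{S}$, and precisely which of them overlap $u_p$, all as functions of the partial sums of $m_1,\ldots,m_N$. Solving the mod-$4$ congruences of the previous paragraph then singles out the first block ($p=1$, forcing $(i,j)=(1,2)$ once $m_1=2$) and yields $m_2\equiv2\ (\mathrm{mod}\ 4)$ and $m_j\equiv0\ (\mathrm{mod}\ 4)$ for $j\geq3$, while showing the analogous congruences are inconsistent for every later block and for the singleton cross-block case. I expect the main obstacle to be this identification of the eigenvalue support of $u_1$ and the reduction of the simultaneous congruences to the stated closed form, since it is here that the ``modulated chain of length $3$'' structure mentioned in the introduction must be made quantitative.
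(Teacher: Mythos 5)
Your first three steps are correct, and they are in fact the paper's own proof in coordinate-free dress: the within-block difference vectors are exactly the eigenvectors $\left[\mathbf{0}_{\sigma_{p-1}}|u|\mathbf{0}_{\sigma_{2k}-\sigma_{p}}\right]^{T}$ constructed in the paper's proof (so $\mu_{p}=\lambda_{0}(p)$ for odd $p$ and $\mu_{p}=\lambda_{1}(p)$ for even $p$), while your symmetric subspace $\mathcal{S}$ is spanned by the null vector together with the chain eigenvectors $\overrightarrow{\lambda}_{0}(j)$ of Eq.~(\ref{eig}). Your same-block amplitude $\frac{1}{m_{p}}\bigl(c_{pp}(t)-e^{-i\mu_{p}t}\bigr)$ and the resulting necessary conditions $m_{p}=2$, $|c_{pp}(t)|=1$, $c_{pp}(t)=-e^{-i\mu_{p}t}$ are right, as is the elimination of cross-block pairs with a non-singleton block. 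But the proposal halts precisely where the theorem's content lies, and you say so yourself: you never identify the eigenvalue support of $u_{p}$. Consequently three of the theorem's claims are left as conjectures rather than proved: (a) that no block $p\geq2$ with $m_{p}=2$ admits same-block transfer, which is what forces $(i,j)=(1,2)$; (b) that the singleton--singleton cross-block case fails; and (c) that the surviving conditions are exactly $t=\pi/2$, $m_{2}\equiv2\operatorname{mod}4$, $m_{j}\equiv0\operatorname{mod}4$ (whose sufficiency, needed for the ``if'' direction, is likewise only asserted).

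The gap is fillable, and it is worth recording the punchline your plan needs, since it is where the paper does its work. Expanding $u_{p}$ in the eigenbasis, its support consists of $0$ together with the eigenvalues carried by the chain vectors $\overrightarrow{\lambda}_{0}(j)$ for $j\geq\max(p,2)$. The decisive coincidence is that for every $p\geq2$ the chain vector $\overrightarrow{\lambda}_{0}(p)$ overlaps $u_{p}$ and carries the eigenvalue $\mu_{p}$ itself ($\lambda_{0}(p)$ if $p$ is odd, $\lambda_{1}(p)$ if $p$ is even), so your alignment requirement $e^{-i\mu_{p}t}=e^{i\theta}$ directly contradicts the matching requirement $c_{pp}(t)=-e^{-i\mu_{p}t}$; only $p=1$ escapes, because there is no chain vector at $j=1$ and $\lambda_{0}(1)$ differs from every eigenvalue supported by $u_{1}$. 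The paper's telescoping identity $\sum_{j>q}m_{j}/(\sigma_{j-1}\sigma_{j})+1/\sigma_{2k}=1/\sigma_{q}$ bounds every cross-block entry by $2/\sigma_{q}\leq2/3$, disposing of (b) in one line. Finally, since the all-ones vector always lies in the support, the common phase is $+1$; then $e^{-it\lambda_{1}(2)}=1$ and $e^{-it\lambda_{0}(1)}=-1$ give $e^{-2it}=-1$ because $\lambda_{1}(2)-\lambda_{0}(1)=m_{1}=2$, and differencing the remaining supported eigenvalues yields $e^{-itm_{2}}=-1$ and $e^{-itm_{j}}=1$ for $j\geq3$, whence the mod-$4$ congruences and $t=\pi/2$ (up to the periodicity noted in the paper) --- exactly conditions (i)--(iii) of the paper's proof, with sufficiency following by checking residues of the supported eigenvalues mod $4$. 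In short, your route is viable and in substance identical to the paper's, but as submitted the decisive spectral bookkeeping --- which \emph{is} the proof --- is missing.
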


\begin{proof}
Suppose that we have $m_{1},\ldots,m_{2k}\in\mathbb{N}$ with $m_{1}\geq2$, and
consider the Laplacian matrix $L$ of $\Gamma(m_{1},\ldots,m_{2k})$. For each
$1\leq l\leq2k$, let $\sigma_{l}=\sum_{p=1}^{l}m_{p}$. Fix an odd index $j$
between $1$ and $2k$, and note that if $m_{j}\geq2$, then for any vector
$u\in\mathbb{R}^{m_{j}}$ such that $u\perp\mathbf{1}_{m_{j}}$, the vector
$\left[  \mathbf{0}_{\sigma_{j-1}}|u|\mathbf{0}_{\sigma_{2k}-\sigma_{j}%
}\right]  ^{T}$ is an eigenvector for $L$ corresponding to the eigenvalue
$\lambda_{0}(j)\equiv m_{j+1}+m_{j+3}+\ldots+m_{2k}$. Letting $u_{1}%
,\ldots,u_{m_{j}-1}$ be an orthonormal basis of $(\mathbf{1}_{m_{j}})^{\perp}%
$, we find that $\sum_{l=1}^{m_{j}-1}u_{l}u_{l}^{T}=I-\frac{1}{m_{j}}%
J_{m_{j}\times m_{j}}$. Note also that if $j$ is odd and $2\leq j\leq2k$, then
the vector
\begin{equation}
\overrightarrow{\lambda}_{0}(j):=\left[
\begin{array}
[c]{c}%
\left(  \frac{m_{j}}{\sigma_{j-1}\sigma_{j}}\right)  ^{1/2}\mathbf{1}%
_{\sigma_{j-1}}\\\hline
-\left(  \frac{\sigma_{j-1}}{m_{j}\sigma_{j}}\right)  ^{1/2}\mathbf{1}_{m_{j}%
}\\\hline
\mathbf{0}_{\sigma_{2k}-\sigma_{j}}%
\end{array}
\right]  \label{eig}%
\end{equation}
is also an eigenvector for $L$ corresponding to $\lambda_{0}(j)$ that has
length one and is orthogonal to the eigenvectors constructed above. Similarly,
if $j$ is an even index between $2$ and $2k$, and $u_{1},\ldots,u_{m_{j}-1}$
is an orthonormal basis of $(\mathbf{1}_{m_{j}})^{\perp}$, then the vectors
$\left[  \mathbf{0}_{\sigma_{j-1}}|u_{l}|\mathbf{0}_{\sigma_{2k}-\sigma_{j}%
}\right]  ^{T}$, with $l=1,\ldots m_{j}-1$, are eigenvectors for $L$
corresponding to the eigenvalue $\lambda_{1}(j)\equiv\sigma_{j}+m_{j+2}%
+m_{j+4}+\ldots+m_{2k}$. Note further that the vector in Eq. (\ref{eig}) is an
eigenvector for $L$ corresponding to $\lambda_{1}(j)$.

Finally, we observe that $\mathbf{1}_{\sigma_{2k}}/\sqrt{\sigma_{2k}}$ is a
null vector for $L$. It now follows that for each index $j$ between $2$ and
$2k$, the matrix
\begin{multline*}
P[\lambda_{x}(j)]=\left[
\begin{array}
[c]{c|c|c}%
\mathbf{0} & \mathbf{0} & \mathbf{0}\\\hline
\mathbf{0} & I-\frac{1}{m_{j}}J & \mathbf{0}\\\hline
\mathbf{0} & \mathbf{0} & \mathbf{0}%
\end{array}
\right] \\
+\overrightarrow{\lambda}_{0}(j)\left[
\begin{array}
[c]{c|c|c}%
\sqrt{\frac{m_{j}}{\sigma_{j-1}\sigma_{j}}}\mathbf{1}_{\sigma_{j-1}}^{T} &
-\sqrt{\frac{\sigma_{j-1}}{m_{j}\sigma_{j}}}\mathbf{1}_{m_{j}}^{T} &
\mathbf{0}_{\sigma_{2k}-\sigma_{j}}^{T}%
\end{array}
\right] \\
=\left[
\begin{array}
[c]{c|c|c}%
\frac{m_{j}}{\sigma_{j-1}\sigma_{j}}J_{\sigma_{j-1}\times\sigma_{j-1}} &
-\frac{1}{\sigma_{j}}J & \mathbf{0}\\\hline
-\frac{1}{\sigma_{j}}J & I-\frac{1}{\sigma_{j}}J_{m_{j}\times m_{j}} &
\mathbf{0}\\\hline
\mathbf{0} & \mathbf{0} & \mathbf{0}%
\end{array}
\right]  ,
\end{multline*}
with $x=0,1$, is an orthogonal idempotent (\emph{i.e.}, a projector) for $L$
corresponding to the eigenvalue $\lambda_{0}(j)$ or $\lambda_{1}(j)$ according
as $j$ is odd or even, respectively. Here the $(1,1)$ block is of order
$\sigma_{j-1},$ the $\left(  2,2\right)  $ block is of order $m_{j},$ and
$\mathbf{0}$ denotes a zero block whose order is determined from the context.
It now follows that for any real $t\geq0$, we can use the orthogonal
idempotents to write%
\begin{align}
U_{t}  &  =e^{-it\lambda_{0}(1)}\left[
\begin{array}
[c]{c|c}%
I-\frac{1}{m_{1}}J_{m_{1}\times m_{1}} & 0\\\hline
0 & 0
\end{array}
\right] \label{exp}\\
&  +\sum_{j\geq3,j\text{{, odd}}}e^{-it\lambda_{0}(j)}P[\lambda_{0}%
(j)]\nonumber\\
&  +\sum_{j\geq2,j\text{{, even}}}e^{-it\lambda_{1}(j)}P[\lambda_{i}%
(j)]+\frac{1}{\sigma_{2k}}J.\nonumber
\end{align}
(We note that an equivalent expression for $U_{t}$ appears in Theorem 3.1 of
\cite{ik}.)

Now we consider an off-diagonal entry $z$ of $U_{t}$; for concreteness, we
take $z$ to be in the upper triangle of the matrix, and we suppose that $z$ is
in a column that lies in the $j_{0}$-th subset of the partitioning that is
induced by the indices $m_{1},\ldots,m_{2k}$. (We observe in passing that all
such offdiagonal entries are equal.) From Eq. (\ref{exp}), we find that%
\[
z=e^{-it\lambda_{p_{j_{0}}}(j_{0})}(\frac{-1}{\sigma_{j_{0}}})+\sum
_{j=j_{0}+1}^{2k}e^{-it\lambda_{p_{j}}(j)}\frac{m_{j}}{\sigma_{j-1}\sigma_{j}%
}+\frac{1}{\sigma_{2k}},
\]
where for each index $j$, $p_{j}$ is $0$ or $1$ according as $j$ is odd or
even, respectively. Consequently,
\[
\left\vert z\right\vert \le\frac{1}{\sigma_{j_{0}}}+\sum_{j=j_{0}+1}^{2k}%
\frac{m_{j}}{\sigma_{j-1}\sigma_{j}}+\frac{1}{\sigma_{2k}}.
\]
Note that $m_{2k}/(\sigma_{2k-1}\sigma_{2k})+1/\sigma_{2k}=1/\sigma_{2k-1}$.
Thus, the summation giving $\left\vert z\right\vert $ telescopes, so that%
\[
\sum_{j=j_{0}+1}^{2k}\frac{m_{j}}{\sigma_{j-1}\sigma_{j}}+\frac{1}{\sigma
_{2k}}=\frac{1}{\sigma_{j_{0}}}.
\]
Hence, $\left\vert z\right\vert \leq2/\sigma_{j_{0}}$. In particular, since
$m_{1}\geq2$, we see that $\left\vert z\right\vert <1$ if $j_{0}\geq2$.

Suppose now that $\left\vert z\right\vert =1$. Then necessarily $j_{0}=1,$
and, since $1\leq2/m_{1}$, we find that $m_{1}=2$. Indeed, it must also be the
case that $z$ is the $(1,2)$ entry of $U_{t}$. From the consideration of the
equality case in the triangle inequality, we find that $\left\vert
z\right\vert =1$ if and only if, in addition to $m_{1}=2$ we have:

\begin{enumerate}
\item[(i)] \emph{ }$e^{-it\lambda_{0}(1)}=-1$, while

\item[(ii)] $e^{-it\lambda_{0}(j)}=1$, for all odd $j\geq3$ and

\item[(iii)] $e^{-it\lambda_{1}(j)}=1$, for all even $j$.
\end{enumerate}

From $\emph{(i)}$ we have $e^{-2ti}=-1$, while from \emph{(ii)} it follows
that $e^{-itm_{2l}}=1$ for each $l=2,\ldots,k$. Applying these conditions, in
conjunction with \emph{(iii)}, it then follows that $e^{-itm_{2}}=-1$, while
$e^{-itm_{2l+1}}=1$, for $l=1,\ldots,k-1$. Hence there are integers
$p_{1},\ldots,p_{2k}$ such that%
\[
\frac{t}{\pi}=\frac{2p_{1}+1}{2}=\frac{2p_{2}+1}{m_{2}}=\frac{2p_{j}}{m_{j}},
\]
for $j=3,\ldots,2k$. We thus find that $m_{2}=\left(  4p_{2}+2\right)
/\left(  2p_{1}+1\right)  $ while $m_{j}=4p_{j}/\left(  2p_{1}+1\right)  $,
$j=3,\ldots,2k$. We now deduce that $m_{2}\equiv2\operatorname{mod}4$, and
$m_{j}\equiv0\operatorname{mod}4$ for $j=3,\ldots,2k$.

Next, we consider the values of $t\in\lbrack0,2\pi]$ for which the $(1,2)$
entry of $U_{t}$ has modulus one, and note that since $L$ has integer
eigenvalues, an offdiagonal entry of $U_{t}$ has modulus one if and only if
the corresponding entry in $\exp(-i(t+2q\pi)L)$ has modulus one for any
integer $q$. From the foregoing we find that the only possible values of
$t\in\lbrack0,2\pi]$ for which the $(1,2)$ entry of $U_{t}$ has modulus one
are $t=\pi/2$ and $t=3\pi/2$. Since $m_{2}\equiv2\operatorname{mod}4$ and
$m_{j}\equiv0\operatorname{mod}4,$ for $j=3,\ldots,2k,$ we find readily that
conditions i)-iii) hold when $t=\pi/2$ or $t=3\pi/2,$ so that the $(1,2)$
entries of $\exp(-i\pi L/2)$ and $\exp(-i3\pi L/2)$ have modulus one. The
second part of the theorem follows along the same lines.
\end{proof}

\begin{corollary}
Let $G$ be a threshold graph on $n=4m$ vertices with $m_{1}=2$ and
$m_{2}=4m-2$. The pair $\{i,j\}$, with $i$ and $j$ in the set of size $m_{1}$,
can be found with certainty by the use of $O(n-1)$ evolutions induced by $L$,
each one of time $\pi/2$.
\end{corollary}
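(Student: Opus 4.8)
The plan is to reduce the search to the explicit form of the evolution operator at $t=\pi/2$ and then to read off its combinatorial structure. First I would observe that $\Gamma(2,4m-2)=O_2\vee K_{4m-2}$ is exactly $K_n$ with a single edge $\{i,j\}$ deleted, where $i,j$ are the two vertices of the $m_1$-set. This graph has only two distinct degrees, $n-1$ on the $4m-2$ clique vertices and $n-2$ on the pair $\{i,j\}$, hence only three distinct Laplacian eigenvalues: $0$ (eigenvector $\mathbf{1}_n/\sqrt n$), $n-2$ (eigenvector $(|i\rangle-|j\rangle)/\sqrt2$, as one checks directly since the clique-contributions cancel), and $n$ with multiplicity $n-2$. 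Writing $U_t$ through the three spectral projectors $P_0,P_{n-2},P_n$ and substituting $t=\pi/2$ together with $n=4m$, the phases $e^{-it\lambda}$ become $1,-1,1$ respectively, so that
\[
U_{\pi/2}=P_0-P_{n-2}+P_n=I-2P_{n-2}=I-(|i\rangle-|j\rangle)(\langle i|-\langle j|).
\]
Thus $U_{\pi/2}$ is precisely the transposition swapping $i$ and $j$ and fixing every other vertex, recovering the perfect state transfer of Theorem \ref{com} as the special case $U_{\pi/2}|i\rangle=|j\rangle$. Crucially, this identity holds whatever the (unknown) deleted edge is, by the symmetry of the construction.

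Second, I would read this permutation as a deterministic query. Since $U_{\pi/2}$ sends each standard basis vector to a standard basis vector, an excitation initialized at a chosen vertex $v$ and evolved for time $\pi/2$ is measured, with certainty, at $v$ if $v\notin\{i,j\}$ and at the partner endpoint if $v\in\{i,j\}$. Hence a single evolution started at $v$ decides, with no error, whether $v$ is one of the two faulty vertices, and if so simultaneously returns the other; no knowledge of $\{i,j\}$ is needed to run the experiment, only to predict its outcome.

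Third, I would assemble the algorithm: initialize the excitation at distinct vertices $v_1,v_2,\dots$ in turn, each time evolving for $\pi/2$ and measuring. The first trial whose outcome differs from its input reveals both endpoints of the missing edge. In the worst case the endpoints are probed last, but after $n-2$ null outcomes the two unprobed vertices are forced to be $\{i,j\}$; either way at most $n-1$ evolutions are needed, giving the claimed $O(n-1)$ count, and since every step is deterministic the pair is found with certainty. For the speed-up remark I would contrast this with the $\Theta(n^2)$ edge-queries a classical tester needs to locate one missing edge among the $\binom{n}{2}$ candidates.

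The main obstacle is that Theorem \ref{com} asserts only perfect transfer for the pair $(i,j)$ and says nothing about excitations launched from the clique, whereas the corollary requires the stronger fact that every other vertex is left exactly fixed at $t=\pi/2$. This is what the three-eigenvalue computation supplies: the collapse of $U_{\pi/2}$ to the single transposition $I-2P_{n-2}$ is the crux, and it hinges on $n\equiv0\operatorname{mod}4$ forcing the phase on the large eigenspace (eigenvalue $n$) to be $+1$ while the phase on the one-dimensional eigenspace (eigenvalue $n-2\equiv2\operatorname{mod}4$) is $-1$.
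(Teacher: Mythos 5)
Your proposal is correct and follows essentially the paper's route: the paper proves the corollary by specializing Theorem \ref{com} to $K_{n}^{-}=\Gamma(2,n-2)$, asserting $p_{K_{n}^{-}}\left(  i,j,\pi/2\right)  =1$ together with $p_{K_{n}^{-}}\left(  k,l,\pi/2\right)  =0$ for all other pairs, and then sweeping the vertices with at most $n-1$ evolutions of duration $\pi/2$, which is exactly your algorithm. If anything, yours is slightly more complete on the point you yourself flag: the vanishing of the other transition probabilities does not follow from the bare statement of Theorem \ref{com} (which only characterizes when the fidelity equals one), but from the spectral decomposition in Eq. (\ref{exp}); your three-eigenvalue computation $U_{\pi/2}=I-(|i\rangle-|j\rangle)(\langle i|-\langle j|)$ supplies precisely this step, which the paper's proof asserts without derivation.
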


\begin{proof}
Consider $K_{n},$ the complete graph on $n$ vertices. A special case of
Theorem \ref{com} is $K_{n}^{-}:=K_{n}-\left\{  i,j\right\}  $, for two
distinct arbitrary vertices $i$ and $j$. When $n=4m$ ($m\geq1$), $p_{K_{n}%
^{-}}\left(  i,j,\pi/2\right)  =1$ and $p_{K_{n}^{-}}\left(  k,l,\pi/2\right)
=0$, for every pair $\{k,l\}$, with $k,l\neq i,j$. Since $|E(K_{n}%
)|\rightarrow n^{2}$ for $n\rightarrow\infty$, $n-1$ applications of
$U_{\pi/2}$ implement a deterministic and optimal version of Grover's search
\cite{be}.
\end{proof}

\bigskip

In practice, we are given a network modeled by a complete $K_{n}$ that is
missing a single unknown link $\{i,j\}$. By starting the dynamics on each
possible vertex (or, equivalently, particle) of $K_{n}^{-}$, we can determine
the missing edge by letting the entire system evolve for a time $\pi/2$ and
then perform a projective measurement at the same vertex. The dynamics governs
the behaviour of a walker on the network. If the walker has moved, then the
new position is vertex $j$ and the link $\{i,j\}$ is missing.

The same reasoning may be applied to find a missing matching. Recall that a
\emph{matching} is a set of vertex-disjoint edges. A matching is
\emph{perfect} if it includes all vertices.

\begin{corollary}
Let $K_{n}$ be a graph on $n=4m$ vertices. A deleted matching from $K_{n}$ of
size $k\leq2m$ can be found with certainty by the use of exactly $n/2-1$
evolutions induced by $L$, each one of time $\pi/2$.
\end{corollary}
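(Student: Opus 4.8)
The plan is to reduce the problem of finding a deleted matching of size $k \le 2m$ to repeated applications of the single-edge detection procedure established in the previous corollary, and then to account carefully for the number of evolutions needed. First I would set up the structure: suppose $K_n$ with $n=4m$ has a matching $M = \{\{a_1,b_1\},\ldots,\{a_k,b_k\}\}$ deleted, where the $2k$ endpoints are distinct since $M$ is a matching. The key observation is that starting the dynamics at a vertex $i$ and measuring at $i$ after time $\pi/2$ distinguishes matched from unmatched vertices: by Theorem \ref{com} and the preceding corollary, if $i$ is an endpoint of some deleted edge $\{i,j\} \in M$, the excitation moves entirely to $j$ (so $p(i,i,\pi/2)=0$ and the walker is found at $j$), whereas if $i$ is not covered by $M$, then $i$ behaves as a vertex of full degree and the excitation returns, $p(i,i,\pi/2)=1$.

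The main step is to argue that a single evolution from vertex $i$ simultaneously reveals both whether $i$ is matched and, if so, the identity of its partner $j$ --- so one evolution resolves an entire edge, not merely one endpoint. Consequently, to identify all $k$ edges of $M$, it suffices to probe one endpoint per edge. The counting then runs as follows: I would partition the $n$ vertices and observe that once an evolution from $i$ locates its partner $j$, both $i$ and $j$ are accounted for, so we need not start the dynamics at $j$. Running evolutions from vertices one at a time and skipping any vertex already identified as a matched partner, every evolution either discovers a new deleted edge (consuming two vertices) or certifies an unmatched vertex (consuming one). After at most $n/2 - 1$ evolutions we will have examined enough vertices to determine $M$ completely, because once we have classified all but the final pair the remaining edge (if present) is forced by the matching structure and the fixed total $n$.

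The delicate point --- and what I expect to be the main obstacle --- is justifying the exact count $n/2 - 1$ rather than a looser bound like $n/2$ or $n-1$. The saving of the last evolution comes from the deterministic and global nature of the dynamics: after $n/2 - 1$ evolutions we have probed $n/2-1$ distinct "representatives," and since each deleted edge removes a pair and $k \le 2m = n/2$, the parity and completeness of the matching let us infer the status of the untested vertices without a further evolution. I would make this rigorous by showing that after $n/2-1$ evolutions at most two vertices remain unprobed, and that the edge relation among them is determined by conservation of the examined data (the set of discovered partners together with the known value of $k$ or, failing that, the forced structure of any remaining edge). The care needed is to ensure no ambiguity survives: one must verify that the final unprobed pair cannot hide two distinct configurations consistent with all prior measurements, which follows from the fact that the deleted set is a matching and every probed vertex returned an unambiguous classification.
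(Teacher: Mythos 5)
Your overall scheme---probe a vertex, learn its partner or certify it unmatched, skip discovered partners, and let the final pair be forced---is exactly the argument the paper itself gives (in one informal sentence after the corollary). But your justification of the key dynamical fact has a genuine gap: you cite Theorem \ref{com} and the single-edge corollary, yet for $k\geq2$ the graph $K_{n}-M$ is \emph{not} a threshold graph (any two deleted disjoint edges $\{a_{1},b_{1}\},\{a_{2},b_{2}\}$ leave an induced $C_{4}$ on their endpoints, and threshold graphs are $C_{4}$-free), so the theorem simply does not apply to the multi-edge-deleted graph, and the single-edge result does not by itself rule out interference among the $k$ deleted edges. What actually saves the statement is a direct computation: $L(K_{n}-M)=nI-J-\sum_{\{a,b\}\in M}(e_{a}-e_{b})(e_{a}-e_{b})^{T}$, and since $J(e_{a}-e_{b})=0$ all summands commute; at $t=\pi/2$ with $n\equiv0\operatorname{mod}4$ one gets $U_{\pi/2}=\prod_{\{a,b\}\in M}\bigl(I-(e_{a}-e_{b})(e_{a}-e_{b})^{T}\bigr)$, i.e.\ the permutation swapping each matched pair and fixing every unmatched vertex. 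Some argument of this kind is needed before you may say that one evolution from $i$ reveals $i$'s status and partner regardless of the rest of $M$.

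The counting step also fails as you set it up. Your claim that ``after $n/2-1$ evolutions at most two vertices remain unprobed'' is false: an evolution that certifies an unmatched vertex consumes only \emph{one} vertex, so if the first $n/2-1$ probes all land on unmatched vertices (possible whenever $k$ is small, since there are $n-2k$ of them), only $n/2-1$ vertices are classified and $n/2+1$ remain completely unknown. The count $n/2-1$ is right precisely when the deleted matching is perfect ($k=2m$): then every probe finds an edge, and after $n/2-1$ probes the last two vertices are forced to be partners. For general $k$ the worst case is $n-k-1$ probes---consistent with the paper's own first corollary, where $k=1$ requires $O(n-1)$ evolutions, not $n/2-1$. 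Relatedly, your disambiguation of the final pair needs $k$ (or perfectness) as given data: if $u,v$ are the two unprobed vertices, the hypotheses ``$\{u,v\}\in M$'' and ``$u,v$ both unmatched'' produce identical outcomes at every probed vertex, because $U_{\pi/2}$ factors over disjoint pairs; so, contrary to your last sentence, ambiguity \emph{does} survive unless the size of the matching is known. (In fairness, the paper's statement is itself loose on exactly these points; but your proposal, which undertakes to make the exact count rigorous, inherits rather than repairs them.)
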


It is just matter of running the dynamics from an arbitrary vertex $i$ and
detecting the missing edge $\{i,j\}$ in the matching. Then, we pass to a
vertex $k\neq i,j$, and so on and so forth. A deterministic search requires at
most $%
{\textstyle\sum\nolimits_{i=3\text{: odd}}^{n-1}}
i=n^{2}/4-1$ steps. For threshold graphs, Theorem \ref{com} specifies
completely the detectable links. From the perspective of a direct application
to network search based on the described method (free-evolution on a spin
system), the theorem shows that the complete graph is the only threshold graph
in which every deleted link can be actually found. For all other threshold
graphs there are some undetectable links. Even if the same procedure can be
certainly generalized to any graph, since the search is performed in cliques
(\emph{i.e.}, complete subgraphs), a complete knowledge of cliques is
necessary and a mechanism to induce evolution only in desired cliques is
needed. Still, it is useful to remark that implementing such a mechanism
(which would turn on and off the interactions between different cliques)
permits the transfer of an excitation to any desired vertex.

\section{Faulty nodes}

Here we consider the problem of the previous section but for vertices. Given a
threshold graph $G$, let $\hat{G}=G-j$, for some vertex $j\in V$, and let
$\hat{U}_{t}=\exp(-iL(\hat{G})t)$. We ask whether it is plausible that if $G$
has perfect state transfer, we are then able to detect the missing vertex. The
idea is based on taking advantage of the graph structure, apart from the edge
between the two vertices in the set of size $m_{1}$. After a technical lemma,
we will prove that the offdiagonal entries of $\hat{U}_{t}$, while bounded
away from $1$ in modulus, can have relatively large magnitude, something which
does not help to give specific information about the deleted vertex. As a
consequence, we do not obtain sufficient information about $j$. The negative
result highlights an important role for special symmetries. We shall give a
proof for $G=\Gamma(m_{1},m_{2},\ldots,m_{2k})$. The theorem for the case
$\Gamma(m_{1},m_{2},\ldots,m_{2k+1})$ has a parallel statement.

\begin{lemma}
\label{cos}Let $a\in\mathbb{N\,}$\ be odd. Then

\begin{enumerate}
\item[(i)] $\max\{\min\{-\cos2t,-\cos at,\cos(a-2)t\}|t\in\lbrack0,2\pi]\}$

$=\cos(\pi/a)$;

\item[(ii)] $\max\{\min\{-\cos2t,\cos at,-\cos(a-2)t\}|t\in\lbrack0,2\pi]\}$

$=\cos(\pi/a)$.
\end{enumerate}
\end{lemma}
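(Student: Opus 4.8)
The plan is to analyze a constrained optimization: I want to maximize over $t\in[0,2\pi]$ the pointwise minimum of three cosine expressions. Since we are maximizing a minimum, the intuition is that the optimum occurs where the binding constraints among the three functions balance against one another, i.e.\ where two (or more) of the three quantities are equal and as large as possible. So first I would focus on part~(i) and set $f_1(t)=-\cos 2t$, $f_2(t)=-\cos at$, $f_3(t)=\cos(a-2)t$, and study $m(t)=\min\{f_1,f_2,f_3\}$. The claim is that $\max_t m(t)=\cos(\pi/a)$, so I would look for a $t^\ast$ achieving this value and then verify it is optimal.

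Next I would exhibit the candidate maximizer explicitly. Guided by the target value $\cos(\pi/a)$, the natural guess is $t^\ast=\pi/a$, since then $f_2(t^\ast)=-\cos\pi=1$, which is the global maximum of $f_2$ and hence not binding. At $t^\ast=\pi/a$ I would check that $f_1(\pi/a)=-\cos(2\pi/a)$ and $f_3(\pi/a)=\cos((a-2)\pi/a)=\cos(\pi-2\pi/a)=-\cos(2\pi/a)$, so $f_1$ and $f_3$ coincide there; the value is $-\cos(2\pi/a)$. This does not immediately read as $\cos(\pi/a)$, so the candidate point likely needs adjustment: I would instead solve $f_1(t)=f_3(t)$, i.e.\ $-\cos 2t=\cos(a-2)t$, which (using $\cos(a-2)t=-\cos 2t$ iff $(a-2)t=\pi-2t+2\pi\mathbb Z$ or the reflected family) gives candidate points of the form $t=\pi/a$ or $t=(2\ell+1)\pi/a$. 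Among these I would select the one making the common value equal to $\cos(\pi/a)$ and simultaneously keeping $f_2\ge\cos(\pi/a)$, then confirm that value is the largest attainable.

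The key structural step is the upper bound: I must show $m(t)\le\cos(\pi/a)$ for \emph{every} $t$. For this I would argue that the three arguments $2t,\ at,\ (a-2)t$ satisfy the linear relation $at=2t+(a-2)t$, so the three phases are not independent; one can always find one of the three cosines forced to be at least $\cos(\pi/a)$ in the appropriate sign, by a pigeonhole/interval argument on where $2t$ lands modulo $2\pi$ relative to the subdivision by $\pi/a$. Concretely I would partition $[0,2\pi]$ according to the sign pattern of the three functions and show that in each region at least one of $f_1,f_2,f_3$ drops to $\cos(\pi/a)$ or below, so the minimum cannot exceed $\cos(\pi/a)$. The oddness of $a$ enters here: it guarantees $\gcd$-type alignment of the three frequencies $2,a,a-2$ (note $a$ and $a-2$ are both odd, $2$ is even) so that no common improvement past $\cos(\pi/a)$ is possible.

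The main obstacle I anticipate is precisely this upper-bound step: verifying that the three interlocked cosines cannot \emph{simultaneously} all exceed $\cos(\pi/a)$ requires a careful case analysis over the phase of $t$, rather than a single slick inequality, and the parity of $a$ must be used at exactly the right moment. Part~(ii) I would then dispatch quickly by symmetry: replacing $t$ by $t+\pi$ (or using $a$ odd so that $\cos a(t+\pi)=-\cos at$ and $\cos(a-2)(t+\pi)=-\cos(a-2)t$ while $\cos 2(t+\pi)=\cos 2t$) transforms the triple of part~(i) into the triple of part~(ii), showing the two maxima coincide.
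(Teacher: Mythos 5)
Your plan is correct in outline and, at its core, it is the same proof as the paper's: exhibit an explicit maximizer and prove the matching upper bound by an interval argument on the phase of $2t$, with the oddness of $a$ entering exactly where you predict. For the upper bound, the paper does precisely your ``pigeonhole on where $2t$ lands,'' but it collapses to just two cases and \emph{never uses the third function}: writing $a=2m+1$, the set where $-\cos 2t\geq\cos(\pi/a)$ is exactly $S=\left[\frac{m}{2m+1}\pi,\frac{m+1}{2m+1}\pi\right]\cup\left[\frac{3m+1}{2m+1}\pi,\frac{3m+2}{2m+1}\pi\right]$; off $S$ the first term is already below $\cos(\pi/a)$, while on $S$ one writes $t=\frac{\pi}{2}+s$ or $\frac{3\pi}{2}+s$ with $|s|\leq\frac{\pi}{2a}$ and uses that $a-2$ is odd to get $|\cos(a-2)t|=|\sin((a-2)s)|\leq\sin\!\left(\frac{\pi}{2}-\frac{\pi}{a}\right)=\cos(\pi/a)$. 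So the bound $\min\{-\cos 2t,\pm\cos(a-2)t\}\leq\cos(\pi/a)$ holds with only two of the three functions, handling (i) and (ii) simultaneously; your anticipated linear relation $at=2t+(a-2)t$ and sign-pattern partition are not needed, so you somewhat overestimate the difficulty of this step. Your attainability step goes through as you describe: at $t=(2\ell+1)\pi/a$ one has $-\cos at=1$ automatically and $-\cos 2t=\cos(a-2)t$, and choosing $\ell=(a-3)/4$ or $\ell=(a-1)/4$ according as $a\equiv 3$ or $1\ (\mathrm{mod}\ 4)$ makes the common value $\cos(\pi/a)$; these are exactly the paper's points $\frac{m}{2m+1}\pi$ ($m$ odd) and $\frac{m+1}{2m+1}\pi$ ($m$ even), so your correction away from the initial guess $t^{\ast}=\pi/a$ (which works only for $a=3$) lands on the right family. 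One genuine, if mild, difference: your reduction of (ii) to (i) via $t\mapsto t+\pi$ (valid since $a$ and $a-2$ are odd while $2$ is even) is slicker than the paper, which instead lists separate attaining points for (ii) and relies on the absolute-value bound above. Incidentally, executing your plan yields the corrected form of the paper's argument: as printed, the paper's two cases are interchanged (for $t\in S$ one has $-\cos 2t\geq\cos(\pi/a)$, not $<$, and the bound on $|\cos(a-2)t|$ holds on $S$, not on its complement).
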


\begin{proof}
Set $a=2m+1$. If%
\[
t\in S=\left[  \frac{m}{2m+1}\pi,\frac{m+1}{2m+1}\pi\right]  \cup\left[
\frac{3m+1}{2m+1}\pi,\frac{3m+2}{2m+1}\pi\right]  ,
\]
then $-\cos2t<\cos\left(  \pi/a\right)  $. On the other hand, if $t\notin S$
then $\left\vert \cos(a-2)t\right\vert \leq\cos\left(  \pi/a\right)  $. Thus,
we find that%
\[%
\begin{tabular}
[c]{l}%
$\min\{-\cos2t,-\cos at,\cos(a-2)t\},$\\
$\min\{-\cos2t,\cos at,-\cos(a-2)t\}\leq\cos\dfrac{\pi}{a},$%
\end{tabular}
\]
for all $t\in\lbrack0,2\pi]$. Let $b=\left(  m+1\right)  /\left(  2m+1\right)
$ and $c=m/\left(  2m+1\right)  $. Next we show that for each of the functions
above, the value $\cos(\pi/a)$ is attainable for some $t$. If $m$ is odd/even
then
\begin{align*}
\cos(\pi/a) =  &  \left\{  -\cos2(c\pi),-\cos a(c\pi),\cos(a-2)(c\pi)\right\}
,\\
&  \left\{  -\cos2(b\pi),-\cos a(b\pi),\cos(a-2)(b\pi)\right\}
\end{align*}
respectively; in a similar way, if $m$ is even/odd%
\begin{align*}
\cos(\pi/a) =  &  \left\{  -\cos2(c\pi),\cos a(c\pi),-\cos(a-2)(c\pi)\right\}
,\\
&  \left\{  -\cos2(b\pi),\cos a(b\pi),-\cos(a-2)(b\pi)\right\}  .
\end{align*}

\end{proof}

\begin{theorem}
\label{com2}Let $m_{1},\ldots,m_{2k}\in\mathbb{N}$ such that $m_{1}=2$,
$m_{2}\equiv2\operatorname{mod}4$, and $m_{l}\equiv0\operatorname{mod}4$, for
$l=3,\ldots,2k$. Let $i\neq j$ and $g=\left(  \frac{1}{2}+\frac{m_{2}}%
{\sigma_{1}\sigma_{2}}+\frac{1}{\sigma_{2k}}\right)  ^{2}$.

\begin{enumerate}
\item[(i)] If $\hat{G}=\Gamma(m_{1}-1,m_{2},\ldots,m_{2k})$ then $|[\hat
{U}_{t}]_{i,j}|\leq2/\left(  m_{2}+1\right)  $;

\item[(ii)] If $l$ is even and $\hat{G}=\Gamma(m_{1},\ldots,m_{l-1}%
,m_{l}-1,m_{l+1},\ldots,m_{2k})$ then%
\begin{align}
|[\hat{U}_{t}]_{i,j}|  &  \leq\left[  g\right.  -\left(  1-\cos\frac{\pi}{1+%
{\textstyle\sum\nolimits_{i=2\text{: even}}^{2k}}
m_{i}}\right) \nonumber\\
&  \left.  \frac{m_{2}}{\sigma_{1}\sigma_{2}\sigma_{2k}}\right]  ^{1/2}%
+\frac{1}{2}-\frac{m_{2}}{\sigma_{1}\sigma_{2}}-\frac{1}{\sigma_{2k}};
\label{eq1}%
\end{align}

\item[(iii)] If $l\geq3$ is odd and $\hat{G}=\Gamma(m_{1},\ldots,m_{l-1}%
,m_{l}-1,m_{l+1},\ldots,m_{2k})$ then
\begin{align}
|[\hat{U}_{t}]_{i,j}|  &  \leq\left[  g\right.  -\left(  1-\cos\frac{\pi}%
{\sum_{i=1\text{: odd}}^{l}m_{i}-1}\right) \nonumber\\
&  \left.  \frac{m_{2}m_{l+1}}{\sigma_{1}\sigma_{2}\sigma_{l}\sigma_{l+1}%
}\right]  ^{1/2}+\frac{1}{2}-\frac{m_{2}}{\sigma_{1}\sigma_{2}}-\frac{m_{l+1}%
}{\sigma_{l}\sigma_{l+1}}. \label{eq2}%
\end{align}

\end{enumerate}
\end{theorem}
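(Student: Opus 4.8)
The plan is to use the fact that every graph $\hat G$ produced by a single deletion is again a threshold graph, so that the explicit spectral resolution \eqref{exp} built in the proof of Theorem \ref{com} applies after adjusting the parameters. For (i), deleting one of the two vertices of the first subset turns $O_{m_1}\vee K_{m_2}$ into $O_1\vee K_{m_2}=K_{m_2+1}$, so $\hat G=\Gamma(m_2+1,m_3,\ldots,m_{2k})$ is a threshold graph whose first block has size $m_2+1$. I would then apply the telescoping estimate from the proof of Theorem \ref{com} to this relabelled graph: every off-diagonal entry sitting in column-block $j_0$ has modulus at most $2/\tilde\sigma_{j_0}$, and since the smallest partial sum is $\tilde\sigma_1=m_2+1$, this yields $|[\hat U_t]_{i,j}|\le 2/(m_2+1)$ uniformly in $t$, which is (i).

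For (ii) and (iii) no blocks merge: $\hat G=\Gamma(m_1,\ldots,m_l-1,\ldots,m_{2k})$ keeps the alternating shape (and $m_l-1\ge1$ because $m_l\ge2$), so I would again expand $\hat U_t$ over the idempotents of its Laplacian and isolate the largest off-diagonal entry, namely the $(1,2)$ entry, the first block still carrying two vertices and hence the largest coefficients. The crucial point is to track how the single deletion perturbs the perfect-transfer spectrum: it shifts precisely the eigenvalues attached to the coefficients $\tfrac1{\sigma_1}=\tfrac12$, $\tfrac{m_2}{\sigma_1\sigma_2}$ and $\tfrac1{\sigma_{2k}}$ (in (ii)), resp. an analogous quadruple localized around blocks $l$ and $l+1$ (in (iii)), in such a way that their pairwise differences become $2$, $a$ and $a-2$, where $a=1+\sum_{i\text{ even}}m_i$ in (ii) and $a=\sum_{i\text{ odd},\,i\le l}m_i-1$ in (iii). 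The congruences $m_2\equiv2$, $m_i\equiv0\pmod4$ guarantee that $a$ is odd, exactly the hypothesis of Lemma \ref{cos}.

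Next I would split $[\hat U_t]_{1,2}=z_1+z_2$, with $z_1$ the three (resp. four) interacting exponentials just described and $z_2$ the remaining bulk. The triangle inequality together with the telescoping identity bounds $|z_2|$ by one minus the sum of the retained coefficients, which is exactly the additive constant in \eqref{eq1} (resp. \eqref{eq2}). Expanding $|z_1|^2$ in case (ii) I would recognise the three cross terms as $-\cos2t$, $\cos at$, $-\cos(a-2)t$, so that $|z_1|^2$ equals $g$ minus a nonnegative combination of $1+\cos2t$, $1-\cos at$ and $1+\cos(a-2)t$; Lemma \ref{cos} then forces at least one of these cosines below $\cos(\pi/a)$, bounding the correction below by a positive multiple of $1-\cos(\pi/a)$ and giving $|z_1|\le[g-(1-\cos(\pi/a))\tfrac{m_2}{\sigma_1\sigma_2\sigma_{2k}}]^{1/2}$. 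Case (iii) proceeds identically, the relevant product of coefficients now being $\tfrac{m_2m_{l+1}}{\sigma_1\sigma_2\sigma_l\sigma_{l+1}}$. Combining through $|z|\le|z_1|+|z_2|$ then delivers \eqref{eq1} and \eqref{eq2}.

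The main obstacle is the bookkeeping of the second step: verifying that a single vertex deletion perturbs the eigenvalues and, more delicately, the eigenvector-derived coefficients so that the three/four dominant phase differences come out to be exactly $2$, $a$ and $a-2$, while the residual coefficients still telescope to the clean additive constants appearing in (ii) and (iii). Once those differences are pinned down, Lemma \ref{cos} performs the optimization over $t$ in a single step, and the remaining estimate $|z|\le|z_1|+|z_2|$ is routine.
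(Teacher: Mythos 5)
Your proposal follows the paper's proof essentially step for step: part (i) via the relabelling $\Gamma(m_{1}-1,m_{2},\ldots,m_{2k})=\Gamma(m_{2}+1,m_{3},\ldots,m_{2k})$ together with the telescoping bound $2/\tilde\sigma_{j_{0}}$ from the proof of Theorem \ref{com}, and parts (ii)--(iii) by the same split of the $(1,2)$ entry into dominant exponentials with phase gaps $2$, $a-2$, $a$ plus a bulk controlled by the telescoped additive constant, followed by Lemma \ref{cos} exactly as the paper applies it. One small correction: in case (iii) the retained group is again a \emph{triple}, not a quadruple --- the paper keeps $-\alpha e^{-it\lambda_{0}(1)}+\beta e^{-it\lambda_{1}(2)}+\gamma e^{-it\lambda_{1}(l+1)}=e^{-it\lambda_{0}(1)}\left(-\alpha+\beta e^{-2it}+\gamma e^{-iat}\right)$ with $\alpha=1/2$, $\beta=m_{2}/(\sigma_{1}\sigma_{2})$, $\gamma=m_{l+1}/(\sigma_{l}\sigma_{l+1})$ --- which is consistent with the product $\beta\gamma=\frac{m_{2}m_{l+1}}{\sigma_{1}\sigma_{2}\sigma_{l}\sigma_{l+1}}$ and the additive constant you correctly identify, so this slip does not affect the argument.
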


\begin{proof}
Let $z$ be an offdiagonal entry of $\hat{U}_{t}$. As in the proof of Theorem
\ref{com}, we find that if $z\neq\lbrack\hat{U}_{t}]_{1,2},[\hat{U}_{t}%
]_{2,1}$ then $\left\vert z\right\vert \leq2/\sigma_{2}\leq2/3$. Thus, for the
remainder of the proof, we assume \emph{wlog} that $z$ is in the $(1,2)$
position. Recall that $\sigma_{l}=\sum_{p=1}^{l}m_{p}$.

\emph{(i) }Note that $\Gamma(m_{1}-1,m_{2},\ldots,m_{2k})=\Gamma(m_{2}%
+1,m_{3},\ldots,m_{2k})$. From the proof of Theorem \ref{com}, it follows that
$\left\vert z\right\vert \leq2/\left(  m_{2}+1\right)  $.

\emph{(ii) }Suppose that $l$ is even and that $\hat{G}=\Gamma(m_{1}%
,\ldots,m_{l-1},m_{l}-1,m_{l+1},\ldots,m_{2k})$. Set $a=1+m_{2}+m_{4}%
+\ldots+m_{2k}$. Suppose that we have positive parameters $\alpha\beta,\gamma$
such that $\alpha\geq\beta,\gamma$, and note that $\left\vert -\alpha
e^{-it(a-2)}+\beta e^{-ita}+\gamma\right\vert ^{2}=\alpha^{2}+\beta^{2}%
+\gamma^{2}-2\alpha\gamma\cos(a-2)t+2\beta\gamma\cos at-2\alpha\beta\cos2t$.
From Lemma \ref{cos}, for each $t$, one of $-\cos2t$, $\cos at$ and
$-\cos(a-2)t$ is bounded above by $\cos\left(  \pi/a\right)  $. It now follows
that $\left\vert -\alpha e^{-it(a-2)}+\beta e^{-ita}+\gamma\right\vert
\leq(\alpha^{2}+\beta^{2}+\gamma^{2}+2\alpha\gamma+2\beta\gamma\cos\frac{\pi
}{a}+2\alpha\beta)^{1/2}=((\alpha+\beta+\gamma)^{2}-(1-\cos\frac{\pi}{a}%
)\beta\gamma)^{1/2}$. Taking $\alpha=1/\sigma_{1}=1/2$, $\beta=m_{2}%
/(\sigma_{1}\sigma_{2})$, and $\gamma=1/\sigma_{2k}$, applying the triangle
inequality as in the proof of Theorem \ref{com}, we have
\begin{align*}
\left\vert z\right\vert  &  \leq\left\vert \frac{-1}{2}e^{-it(a-2)}%
+\frac{m_{2}}{\sigma_{1}\sigma_{2}}e^{-ita}+\frac{1}{\sigma_{2k}}\right\vert
+\left\vert \sum_{j=3}^{2k}\frac{m_{j}}{\sigma_{j-1}\sigma_{j}}\right\vert \\
&  \leq\left\vert \frac{1}{2}e^{-it(a-2)}+\frac{m_{2}}{\sigma_{1}\sigma_{2}%
}e^{-ita}+\frac{1}{\sigma_{2k}}\right\vert +1\\
&  -\frac{1}{2}-\frac{m_{2}}{\sigma_{1}\sigma_{2}}-\frac{1}{\sigma_{2k}},
\end{align*}
which implies Eq. (\ref{eq1}).

\emph{(iii) }Suppose that $l\geq3$ is odd, and that $\hat{G}=\Gamma
(m_{1},\ldots,m_{l-1},m_{l}-1,m_{l+1},\ldots,m_{2k})$. Set $a=m_{1}%
+m_{3}+\ldots+m_{l}-1$. As in the proof of \emph{(ii)}, we consider positive
parameters $\alpha\geq\beta,\gamma$, and note that $\left\vert -\alpha
e^{-it\lambda_{0}(1)}+\beta e^{-it\lambda_{1}(2)}+\gamma e^{-it\lambda
_{1}(l+1)}\right\vert =\left\vert -\alpha+\beta e^{-2it}+\gamma e^{-iat}%
\right\vert $. As in \emph{(ii)}, we deduce from Lemma \ref{cos} that
$\left\vert -\alpha+\beta e^{-2it}+\gamma e^{-iat}\right\vert \leq
((\alpha+\beta+\gamma)^{2}-(1-\cos\frac{\pi}{a})\beta\gamma)^{1/2}$. Taking
$\alpha=1/\sigma_{1}=1/2$, $\beta=m_{2}/(\sigma_{1}\sigma_{2})$, and
$\gamma=m_{l+1}/(\sigma_{l}\sigma_{l+1})$, Eq. (\ref{eq2}) follows.
\end{proof}

\bigskip

Notice that in the context of Theorems \ref{com2}, the $(1,2)$ entry of
$\hat{U}_{t}$ can have large modulus. For example, it is straightforward to
show that if $m_{1}=2$, $m_{2}\equiv2\operatorname{mod}4$, and $m_{l}%
\equiv0\operatorname{mod}4$, for $l=3,\ldots,2k$ and $\hat{G}=\Gamma
(m_{1},m_{2},\ldots,m_{2k-1},m_{2k}-1)$, then $|[\hat{U}_{\frac{\pi}{2}%
}]_{1,2}|=\left(  1-2\left(  (\sigma_{2k}-1)/\sigma_{2k}^{2}\right)  \right)
^{1/2}$. Similarly, if $m_{1}=2$, $m_{2}\equiv2\operatorname{mod}4$, and
$m_{l}\equiv0\operatorname{mod}4$, for $l=3,\ldots,2k+1$ and $\hat{G}%
=\Gamma(m_{1},m_{2},\ldots,m_{2k},m_{2k+1}-1)$, then $|[\hat{U}_{\frac{\pi}%
{2}}]_{1,2}|=\left(  1-2\left(  (\sigma_{2k+1}-1)/\sigma_{2k+1}^{2}\right)
\right)  ^{1/2}$.

\section{Conclusions}

We have shown that the dynamics on a network governed by the Laplacian, seen
as the Hamiltonian restricted to the single excitation sector of an \emph{XX}
spin system, can detect and find a faulty link in the complete graph with a
quadratic gain with respect to a deterministic method, and we have extended
the observation to matchings. The result gives a way to perform optimal
quantum search (on the complete graph), and a new insight into the related
algorithms. Essentially we have a reinterpretation of a fact discovered in
\cite{cas}. Our contribution is to have put the statement in a more general
mathematical context, by giving a complete characterized of threshold graphs
with perfect state transfer with respect to the \emph{XX} model. Dealing with
deleted nodes, we have shown that our method does not give any clear
advantage. The basic idea of the paper is to use a free quantum evolution to
search a missing item on a network. The method is different from the ones
studies in \cite{se} substantially because the algorithm does not require any
control. After the set-up of the network, the system evolves without
intermediate operations. The process is distributed because the nodes of the
network are identified with spin particles. The measurements are local in the
sense that they are concerned with the single sites, independently. That is
why we can locate autonomous agents on the sites.

We have considered threshold graphs only, since these have integer Laplacian
spectrum, and so possess periodic dynamics, \emph{i.e.}, a necessary condition
for perfect state transfer (see \cite{so}). The vertices involved in the
phenomenon are special. The symmetry with respect to these vertices can be
exploited to create a \textquotedblleft reference point\textquotedblright%
\ inside the graph. Looking ahead, one direction for further exploration is to
determine what information about the topology of a spin system (paired with a
network)\ can be obtained by a free evolution and final local measurements.
Here, more than designing search algorithms, it is a matter to determine what
kind of graphs have some sort of searching capability embedded in their
structure. The method of the paper can be generalized to searching a missing
link (or, equivalently, a marked link) in any network, with steps involving
one clique at a time. The method works for vertices if we employ the notion of
the line graph. However, we have shown that the direct detection of a missing
vertex is not a natural task for the studied dynamics, at least on threshold
graphs. It is an open question to determine whether the dynamics can help to
find marked nodes, when taking a different initial probability distribution,
and if we can obtain the quadratic speed-up in this case.

\bigskip

\noindent\emph{Acknowledgments.} We would like to thank Sougato Bose, Abolfazl
Bayat, Daniel Burgarth, Domenico D'Alessandro, Aram Harrow, and Vladimir
Korepin, for helpful discussion. We would also like to thank one of the
anonymous referees for important comments that contributed to improve the
paper. SK is supported in part by the Science Foundation Ireland under Grant
No. SFI/07/SK/I1216b. SS is supported by a Newton International Fellowship.

\end{document}